\providecommand{\tabularnewline}{\\}
  \theoremstyle{plain}
  \newtheorem{lem}{Lemma}
   \newenvironment{proof}[1][\proofname]{\par
     \normalfont\topsep6\p@\@plus6\p@\relax
     \trivlist
     \itemindent\parindent
     \item[\hskip\labelsep
           \scshape
       #1]\ignorespaces
   }{%
     \endtrivlist\@endpefalse
   }
   \providecommand{\proofname}{Proof}
  \theoremstyle{plain}
  \newtheorem{thm}{Theorem}
  \theoremstyle{remark}
  \newtheorem{rem}{Remark}
  \theoremstyle{plain}
  \newtheorem{cor}{Corollary}
\begin{document}

\title{Subset Typicality Lemmas and Improved Achievable Regions in Multiterminal
Source Coding }

\author{Kumar Viswanatha, Emrah Akyol and Kenneth Rose\\
ECE Department, University of California - Santa Barbara\\
\{kumar,eakyol,rose\}@ece.ucsb.edu
\thanks{The work was supported by the NSF under grants CCF-0728986, CCF - 1016861 and  CCF-1118075} }
\maketitle
\begin{abstract}
Consider the following information theoretic setup wherein independent
codebooks of $N$ correlated random variables are generated according
to their respective marginals. The problem of determining the conditions
on the rates of codebooks to ensure the existence of at least one
codeword tuple which is jointly typical with respect to a given joint
density (called the multivariate covering lemma) has been studied
fairly well and the associated rate regions have found applications
in several source coding scenarios. However, several multiterminal
source coding applications, such as the general multi-user Gray-Wyner
network, require joint typicality \textit{only} within subsets of
codewords transmitted. Motivated by such applications, we ask ourselves
the conditions on the rates to ensure the existence of at least one
codeword tuple which is jointly typical within subsets according to
given per subset joint densities. This report focuses primarily on
deriving a new achievable rate region for this problem which strictly
improves upon the direct extension of the multivariate covering lemma,
which has quite popularly been used in several earlier work. Towards
proving this result, we derive two important results called `subset
typicality lemmas' which can potentially have broader applicability
in more general scenarios beyond what is considered in this report.
We finally apply the results therein to derive a new achievable region
for the general multi-user Gray-Wyner network.\end{abstract}
\begin{IEEEkeywords}
Typicality within subsets, Multivariate covering lemma, Multi-user
Gray-Wyner network
\end{IEEEkeywords}

\section{Introduction\label{sec:Introduction-1}}

Consider a scenario where independent codebooks of $N$ random variables
$(X_{1},X_{2},\ldots,X_{N})$ are generated according to some given
marginal distributions at rates $(R_{1},R_{2},\ldots,R_{N})$ respectively.
Let $\mathcal{S}_{1},\mathcal{S}_{2}\ldots\mathcal{S}_{M}$ be $M$
subsets of $\{1,2,\ldots,N\}$ and let the joint distributions of
$(X_{1},X_{2},\ldots,X_{N})$ within each subset, consistent with
each other and with the marginal distributions, be given. We ask ourselves
the conditions on the rates $(R_{1},R_{2},\ldots,R_{N})$ (achievable
region) so that the probability of finding one codeword from each
codebook, such that the codewords are all jointly typical within subsets
$\mathcal{S}_{1},\mathcal{S}_{2}\ldots\mathcal{S}_{M}$ according
to the given per subset joint distributions, approaches $1$. We denote
the given probability distribution over subset $\mathcal{S}_{i}$
by $P(\{X\}_{\mathcal{S}_{i}})$. The conditions on the rates when
$\mathcal{S}_{i}=\{1,\ldots,N\}$, i.e, when the joint distribution
over all the random variables is given, can be derived using standard
typicality arguments and is quite popularly called as the multivariate
covering lemma \cite{EG_LN,VKG}%
\footnote{We note that the underlying principles and proofs of multivariate
covering lemma appeared much earlier in the literature, for example
\cite{EGC}. However the nomenclature and the general applicability
of the underlying ideas have been elucidated quite clearly in \cite{EG_LN}%
}. It says that for any joint density over $(X_{1},X_{2},\ldots,X_{N})$,
if the codebooks are generated according to the respective marginals,
the probability of not finding a jointly typical codeword tuple approach
$0$ if $\forall\mathcal{J}\subseteq\{1,2,\ldots,N\}$:\begin{equation}
\sum_{i\in\mathcal{J}}R_{i}\geq\sum_{i\in\mathcal{J}}H(X_{i})-H(P(\{X\}_{\mathcal{J}}))\label{eq:MCL}\end{equation}
where $\{X\}_{\mathcal{J}}$ denotes the set $X_{i}:i\in\mathcal{J}$
and $H(P)$ denotes the entropy of any distribution $P$. 

A fairly direct extension of the multivariate covering lemma, to the
more general scenario of arbitrary subsets $\mathcal{S}_{1},\mathcal{S}_{2}\ldots\mathcal{S}_{M}$,
which has been quite popularly used in several information theoretic
scenarios, such as \cite{VKG,Ramchandran,MD_ISIT,fusion,DIR}, can
be described as follows. Fix any joint density $\tilde{P}(X_{1},X_{2},\ldots,X_{N})$
such that:\begin{equation}
\tilde{P}(\{X\}_{\mathcal{S}_{j}})=P_{\mathcal{S}_{j}}(\{X\}_{\mathcal{S}_{j}})\,\,\forall j\label{eq:MCL2}\end{equation}
i.e, it satisfies the given joint distributions within subsets $\mathcal{S}_{j}\,\,\forall j$.
Then the set of all rate tuples satisfying the following conditions
are achievable, $\forall\mathcal{J}\subseteq\{1,2,\ldots,N\}$:\begin{equation}
\sum_{i\in\mathcal{J}}R_{i}\geq\sum_{i\in\mathcal{J}}H(X_{i})-H(\tilde{P}(\{X\}_{\mathcal{J}}))\label{eq:MCL3}\end{equation}
The convex closure of all achievable rate tuples, over all such joint
densities $\tilde{P}$ satisfying the given per subset densities is
an achievable region for the problem. We denote this region by $\mathcal{R}_{a}$.
Our primary objective in this report is to show that the rate region
in (\ref{eq:MCL3}) with the individual functionals set to their respective
maxima subject only to their specific exact constraints is, infact,
achievable. Specifically we show that, each of the terms $H(\tilde{P}(\{X\}_{\mathcal{J}}))$
can be replaced with the corresponding maximum entropy functionals
$H^{*}(\tilde{P}(\{X\}_{\mathcal{J}}))$ subject to only the constraints
pertinent to subsets of $\{X\}_{\mathcal{J}}$. This allows us to
achieve simultaneous optimum of all the functionals leading to a strictly
larger achievable region than $R_{a}$. Towards proving this result,
we establish two important lemmas, namely `subset typicality lemmas',
which may prove to have much wider applicability in general scenarios
beyond the scope of this report. 

Scenarios depicted in the above example, where typicality within subsets
of codewords is sufficient for decoding, arise quite frequently in
several multiterminal source coding setups. One of the most typical
examples is the multi-user generalization of the Gray-Wyner network
\cite{GW} discussed in section \ref{sec:GW} where the encoder observes
$K$ random variables and there are $K$ sinks, each decoding one
of the random variables upto a prescribed distortion constraint%
\footnote{We note that \cite{multi_user_privacy} considers a particular generalization
of the Gray-Wyner network to multiple users with applications in information
theoretic security where a unique common branch is sent to all the
decoders along with their respective individual rates. However we
assert that the most general extension of the 2 user Gray-Wyner network
will involve a combinatorial number of branches, each being sent to
a unique subset of the decoders.%
}. The most general setting involves $2^{K}-1$ branches (encoding
rates), each being sent to a unique subset of the decoders. Observe
that it is sufficient if all the codewords being sent to sink $i$
are jointly typical with the $i$'th source sequence and enforcing
joint typicality of all the codewords in an unnecessary restriction.
Similar settings arise in the context of dispersive information routing
of correlated sources \cite{DIR}, fusion coding and selective retrieval
in a database \cite{fusion} and in several other scenarios which
can be considered as particular cross-sections of the general $L-$channel
`multiple descriptions' (MD) problem \cite{VKG,Ramchandran}. We note
that, in this report, we demonstrate the workings of the underlying
principle in the context of the example we described above. However
it is important to note that the results we derive have implications
in a wide variety of problems involving optimization of multiple functionals,
each depending on a subset of the random variables, subject to constraints
on their joint distributions.

\section{Main Results \label{sec:Main_result}}

In this section, we first establish the subset typicality lemmas which
will finally lead to Theorems \ref{thm:main_tm} and \ref{lem:eg}
showing strictly larger achievable rates compared to $\mathcal{R}_{a}$.
Throughout the report, we use the following notation. $n$ independent
and identically distributed (iid) copies of a random variable and
its realizations are denoted by $X_{0}^{n}$ and $x_{0}^{n}$ respectively.
Length $n$, $\epsilon$-typical set of any random variable $X$,
with distribution $P(X)$ is denoted%
\footnote{The parenthesis is dropped whenever it is obvious%
} by $\mathcal{T}_{\epsilon}^{n}(P(X))$. Throughout the report, for
any set $\mathcal{S}$, we use the shorthand $\{U\}{}_{\mathcal{S}}$
to denote the set $\{U_{i}:i\in\mathcal{S}\}$. Note the difference
between $U_{123}$, which is a single random variable and $\{U\}_{123}$,
which is the set of random variables $\{U_{1},U_{2},U_{3}\}$. In
the following Lemmas, we use the notation $P(A)\doteq2^{-nR}$ to
denote $2^{-n(R+\delta(\epsilon))}\leq P(A)\leq2^{-n(R-\delta(\epsilon))}$
for some $\delta(\epsilon)\rightarrow0$ as $\epsilon\rightarrow0$.
To avoid resolvable but unnecessary complications, we further assume
that there exists at least one joint distribution consistent with
the prescribed per subset distributions for $\mathcal{S}_{1},\mathcal{S}_{2},\ldots,\mathcal{S}_{M}$.

\subsection{Subset Typicality Lemmas}
\begin{lem}
\textbf{Subset Typicality Lemma} :\label{lem:PT}Let $(X_{1},X_{2},\ldots X_{N})$
be $N$ random variables taking values on arbitrary finite alphabets
$(\mathcal{X}_{1},\mathcal{X}_{2},\ldots\mathcal{X}_{N})$ respectively.
Let their marginal distributions be $P_{1}(X_{1}),P_{2}(X_{2})\ldots,P_{N}(X_{N})$
respectively. Let $\mathcal{S}_{1},\mathcal{S}_{2}\ldots\mathcal{S}_{M}$
be $M$ subsets of $\{1,2,\ldots,N\}$ and for all $j\in\{1,2,\ldots,M\}$,
let $P_{\mathcal{S}_{j}}(\{X\}_{\mathcal{S}_{j}})$ be any given joint
distribution for $\{X\}_{\mathcal{S}_{i}}$ consistent with each other
and with the given marginal distributions. Generate sequences $x_{1}^{n},x_{2}^{n}\ldots x_{N}^{n}$,
each independent of the other, where $x_{i}^{n}$ is drawn iid according
to the marginal distribution $P_{i}(X_{i})$, i.e., $x_{i}^{n}\sim\prod_{l=1}^{n}P_{i}(x_{il})$.
Then,\begin{eqnarray}
P\left(\{x\}_{\mathcal{S}_{j}}^{n}\in\mathcal{T}_{\epsilon}^{n}\left(P_{\mathcal{S}_{j}}(\{X\}_{\mathcal{S}_{j}})\right),\forall j\in\{1\ldots M\}\right)\nonumber \\
\doteq2^{-n(\sum_{i=1}^{N}H(X_{i})-H(P^{*}))}\label{eq:PT_TP}\end{eqnarray}
 where $P^{*}$ is a distribution over $(\mathcal{X}_{1},\mathcal{X}_{2}\ldots,\mathcal{X}_{N})$
which satisfies:\begin{equation}
P^{*}=\arg\max_{\tilde{P}}H\left(\tilde{P}\right)\label{eq:PT_Given_1}\end{equation}
 subject to $\tilde{P}(\{X\}_{\mathcal{S}_{j}})=P_{\mathcal{S}_{j}}(\{X\}_{\mathcal{S}_{j}})\,\,\forall j\in\{1\ldots M\}$. 
\end{lem}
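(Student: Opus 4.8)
\emph{Proof outline.} The plan is to evaluate the probability on the left-hand side of (\ref{eq:PT_TP}) by the method of types; equivalently, one recognizes it as an instance of Sanov's theorem for the product measure $\prod_{i=1}^{N}P_{i}^{n}$ and the closed convex set of joint distributions whose $\mathcal{S}_{j}$-marginals equal $P_{\mathcal{S}_{j}}$ for all $j$. Throughout I assume (as holds in all the applications) that $\bigcup_{j}\mathcal{S}_{j}=\{1,\ldots,N\}$. For a realization $(x_{1}^{n},\ldots,x_{N}^{n})$ let $Q$ denote its joint type on $\mathcal{X}_{1}\times\cdots\times\mathcal{X}_{N}$, with $Q_{i}$ the $i$-th marginal and $Q_{\mathcal{S}_{j}}$ the marginal on $\{X\}_{\mathcal{S}_{j}}$. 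The event in (\ref{eq:PT_TP}) is exactly $\{Q\in\mathcal{C}_{\epsilon}\}$, where $\mathcal{C}_{\epsilon}$ collects the joint types $Q$ whose marginal $Q_{\mathcal{S}_{j}}$ is within $\epsilon$ of $P_{\mathcal{S}_{j}}$ for every $j$. Since the given per-subset distributions are mutually consistent and consistent with the marginals, each $Q\in\mathcal{C}_{\epsilon}$ has $Q_{i}$ within $\delta(\epsilon)$ of $P_{i}$, hence $D(Q_{i}\Vert P_{i})\le\delta(\epsilon)$, for every $i$.

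I would then decompose $P(Q\in\mathcal{C}_{\epsilon})=\sum_{Q\in\mathcal{C}_{\epsilon}}P(\text{type of }(x_{1}^{n},\ldots,x_{N}^{n})=Q)$. Because the $N$ codebooks are drawn independently from their marginals, any fixed tuple of joint type $Q$ has probability $\prod_{i=1}^{N}\prod_{l=1}^{n}P_{i}(x_{il})=\prod_{i=1}^{N}2^{-n(H(Q_{i})+D(Q_{i}\Vert P_{i}))}$, while the number of tuples of joint type $Q$ is $2^{n(H(Q)+o(1))}$; multiplying, $P(\text{type}=Q)\doteq2^{-nD(Q\Vert\prod_{i}P_{i})}$. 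Using $D(Q_{i}\Vert P_{i})\le\delta(\epsilon)$ and $|H(Q_{i})-H(X_{i})|\le\delta(\epsilon)$ on $\mathcal{C}_{\epsilon}$, this simplifies to $P(\text{type}=Q)\doteq2^{-n(\sum_{i}H(X_{i})-H(Q))}$.

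Next, because the number of joint types with denominator $n$ grows only polynomially in $n$, summing the last estimate over $\mathcal{C}_{\epsilon}$ gives
\begin{equation}
P\!\left(Q\in\mathcal{C}_{\epsilon}\right)\;\doteq\;2^{-n\left(\sum_{i}H(X_{i})\,-\,\max_{Q\in\mathcal{C}_{\epsilon}}H(Q)\right)},
\end{equation}
the upper bound since each of polynomially many terms is at most the largest, the lower bound by keeping only the largest term. It then remains to prove $\max_{Q\in\mathcal{C}_{\epsilon}}H(Q)=H(P^{*})+o(1)$ as $\epsilon\to0$, with $P^{*}$ as in (\ref{eq:PT_Given_1}). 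For the lower direction I would take a joint type $Q^{(n)}$ of denominator $n$ nearest $P^{*}$: it is within $O(1/n)$ of $P^{*}$, so it lies in $\mathcal{C}_{\epsilon}$ once $n$ is large, and $H(Q^{(n)})\ge H(P^{*})-\delta(\epsilon)$ by continuity of entropy. For the upper direction I would observe that any $Q\in\mathcal{C}_{\epsilon}$ has its $\mathcal{S}_{j}$-marginals within $\epsilon$ of $P_{\mathcal{S}_{j}}$, i.e.\ $Q$ is $\epsilon$-approximately feasible for the (nonempty, by the standing assumption) polytope $\{\tilde{P}:\tilde{P}(\{X\}_{\mathcal{S}_{j}})=P_{\mathcal{S}_{j}}(\{X\}_{\mathcal{S}_{j}})\ \forall j\}$; a standard error bound for linear feasibility systems then puts $Q$ within $\delta(\epsilon)$ of some feasible $\tilde{P}$, so $H(Q)\le H(\tilde{P})+\delta(\epsilon)\le H(P^{*})+\delta(\epsilon)$ by the maximality of $P^{*}$ and uniform continuity of $H$ on the simplex. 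Combining the two directions with the previous display yields (\ref{eq:PT_TP}).

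The type-counting bookkeeping and the various $o(n)$ and $\delta(\epsilon)$ estimates are routine. I expect the only real obstacle to be the last step, namely controlling $\max_{Q\in\mathcal{C}_{\epsilon}}H(Q)$: one must show that an $\epsilon$-approximately feasible joint distribution is $O(\epsilon)$-close to an exactly feasible one (stability of the linear feasibility system, where nonemptiness of the feasible polytope is used essentially) and then transfer this to entropies via uniform continuity of $H$.
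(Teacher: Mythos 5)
Your proof is correct, and at bottom it computes the same exponent the paper does, but you get there by a genuinely more self-contained route: the paper simply cites Sanov's theorem (Cover--Thomas 11.4.1) for the product measure $\prod_i P_i$ and the constraint set $\mathcal{E}$, and then observes that minimizing $D(\tilde P\Vert\prod_i P_i)$ over distributions with the prescribed marginals is the same as maximizing $H(\tilde P)$, whereas you re-derive the needed instance of Sanov from scratch by the method of types (probability of a type class under the product measure, polynomial number of types, largest-exponent term dominates). What your version buys is precisely the step you flag at the end: the event in (\ref{eq:PT_TP}) concerns the $\epsilon$-fattened constraint set $\mathcal{C}_\epsilon$, while $P^*$ in (\ref{eq:PT_Given_1}) is defined by the \emph{exact} marginal constraints, so one must show $\max_{Q\in\mathcal{C}_\epsilon}H(Q)\to H(P^*)$ as $\epsilon\to 0$ (equivalently, that the Sanov infimum over the relaxed set converges to the value on the exact polytope, using nonemptiness of that polytope). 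The paper's one-line appeal to Sanov absorbs this into the $\doteq$ notation without comment; your Hoffman-type stability argument for the linear feasibility system plus uniform continuity of $H$ is a legitimate and complete way to close it, and your standing assumption $\bigcup_j\mathcal{S}_j=\{1,\ldots,N\}$ correctly makes explicit what the paper's proof implicitly uses when it asserts that all feasible distributions share the marginals $P_i$.
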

This Lemma essentially says that the total number of sequence tuples
$(x_{1}^{n},x_{2}^{n}\ldots x_{N}^{n})$ generated according to their
respective marginals which are jointly $\epsilon-$typical according
to $P_{\mathcal{S}_{i}}(\{X\}_{\mathcal{S}_{i}})\,\,\forall i$ within
subsets $\mathcal{S}_{1},\mathcal{S}_{2}\ldots\mathcal{S}_{M}$, is
approximately $2^{nH(P^{*})}$ where $P^{*}$ is the \textbf{maximum
entropy} distribution subject to the constraint that the joint density
within subset $S_{i}$ is $P_{\mathcal{S}_{j}}(\{X\}_{\mathcal{S}_{j}})\forall j$. 
\begin{proof}
To prove this Lemma, we resort to Sanov's theorem (\cite{Cover-book}
Theorem 11.4.1) from the theory of large deviations. Sanov's theorem
states that for any distribution $Q(X)$ and for any subset of probability
distributions $\mathcal{E}\subseteq\mathcal{P}$, where $\mathcal{P}$
denotes the universe of the PMFs over the alphabets of $X$:\begin{equation}
Q^{n}(\mathcal{E})\doteq2^{-nD(P^{*}||Q)}\label{eq:PT_Pf_1}\end{equation}
for sufficiently large $n$, where $P^{*}$ is the distribution closest
in relative entropy to $Q$ in $\mathcal{E}$ and $Q^{n}(\mathcal{E})$
denotes the probability that an iid sequence generated according to
$Q(X)$ is $\epsilon-$typical with respect to some distribution in
$\mathcal{E}$. We set $Q(\cdot)=\prod_{i=1}^{N}P_{i}(X_{i})$ and
$\mathcal{E}$ as the set of all distributions over $(\mathcal{X}_{1},\mathcal{X}_{2},\ldots\mathcal{X}_{N})$
satisfying the given constraints. Then it follows from Sanov's theorem
that the probability of $(x_{1}^{n}\ldots x_{N}^{n})$ being $\epsilon-$typical
according to some distribution satisfying the given constraints is
approximately $2^{-nD(P^{*}||\prod_{i=1}^{N}P_{i}(X_{i}))}$, where
$P^{*}$ is the distribution having minimum relative entropy to $\prod_{i=1}^{N}P_{i}(X_{i})$
and satisfying the given constraints. However, all such distributions
have the same marginal distributions $P_{i}(X_{1}),P_{i}(X_{2})\ldots,P_{i}(X_{N})$.
Hence minimizing relative entropy is equivalent to maximizing the
joint entropy leading to $P^{*}$ as defined in (\ref{eq:PT_Given_1}).
Therefore we have:\begin{eqnarray*}
P\left(\{x\}_{\mathcal{S}_{i}}^{n}\in\mathcal{T}_{\epsilon}^{n}(\{X\}_{\mathcal{S}_{i}}),\,\forall i\right)=Q^{n}(\mathcal{E})\doteq2^{-nD(P^{*}||Q)}\\
\doteq2^{-n(\sum_{i=1}^{N}H(X_{i})-H(P^{*}))}\end{eqnarray*}
\begin{equation}
\end{equation}
where the last equality follows because $P^{*}$ satisfies the given
marginals. 
\end{proof}
We note that a particular instance of Lemma \ref{lem:PT} was derived
in \cite{Diggavi_ISIT}. However, as it turns out, for the setup they
consider, this Lemma does not help in deriving an improved achievable
region. In the following lemma, we establish the conditional version
of Lemma \ref{lem:PT}. Note that Lemma \ref{cor:PT} is not used
in proving Theorems \ref{thm:main_tm} or \ref{lem:eg}, but will
play a crucial role in the application of these results to more general
multi-terminal source coding scenarios (as we will see in section
\ref{sec:GW}). 
\begin{lem}
\textbf{Conditional Subset Typicality Lemma} :\label{cor:PT}Let random
variables $(X_{1},X_{2},\ldots X_{N})$, sets $\mathcal{S}_{1},\mathcal{S}_{2}\ldots\mathcal{S}_{M}$
and joint densities $P_{\mathcal{S}_{j}}(\{X\}_{\mathcal{S}_{j}})$
be defined as in Lemma \ref{lem:PT}. Let the sequences $(x_{1}^{n}\ldots x_{N}^{n})$
be generated such that each sequence is generated conditioned on a
subset of already generated sequences $\{x\}_{\mathcal{A}_{i}}^{n}$
and independent of the rest, where $(i,\mathcal{A}_{i})\in\mathcal{S}_{j}$
for some $j\in\{1,\ldots,M\}$. Then we have:\begin{eqnarray}
P\left(\{x\}_{\mathcal{S}_{i}}^{n}\in\mathcal{T}_{\epsilon}^{n}(\{X\}_{\mathcal{S}_{i}})\,\,\forall i\in\{1\ldots M\}\right)\doteq\nonumber \\
2^{-n(\sum_{i=1}^{N}H(X_{i}|\{X\}_{\mathcal{A}_{i}})-H(P^{*}))}\label{eq:cond_typ_eq}\end{eqnarray}
 where $P^{*}$ satisfies (\ref{eq:PT_Given_1}). \end{lem}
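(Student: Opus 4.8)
The plan is to deduce Lemma~\ref{cor:PT} from Lemma~\ref{lem:PT} by a change of measure on the space of tuples $(x_{1}^{n},\ldots,x_{N}^{n})$. Write $\mathcal{E}_{n}$ for the event on the left-hand side of (\ref{eq:cond_typ_eq}), regarded as a set of tuples, let $Q_{m}=\prod_{i=1}^{N}P_{i}(X_{i})$ be the product-of-marginals law used in Lemma~\ref{lem:PT}, and let $Q_{c}$ denote the law of the conditional generation scheme, so that $Q_{c}(x_{1}^{n},\ldots,x_{N}^{n})=\prod_{i=1}^{N}P(x_{i}^{n}\mid\{x\}_{\mathcal{A}_{i}}^{n})$, the $i$-th factor being the product over the $n$ coordinates of $P(X_{i}\mid\{X\}_{\mathcal{A}_{i}})$. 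The hypothesis $\{i\}\cup\mathcal{A}_{i}\subseteq\mathcal{S}_{j}$ for some $j$ makes each such conditional well defined, and the mutual consistency of the per-subset densities (together with the standing assumption that a consistent joint distribution exists) makes it unambiguous and equal to the corresponding conditional of the maximum-entropy distribution $P^{*}$ of (\ref{eq:PT_Given_1}).

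First I would record the local estimate that drives everything: for every tuple in $\mathcal{E}_{n}$ and every $i$, the block $(x_{i}^{n},\{x\}_{\mathcal{A}_{i}}^{n})$ is jointly typical with respect to the appropriate marginal of $P_{\mathcal{S}_{j}}$, hence of $P^{*}$, because it sits inside the $\mathcal{S}_{j}$-typical set and $\{i\}\cup\mathcal{A}_{i}\subseteq\mathcal{S}_{j}$. Applying the conditional typicality estimate coordinatewise then gives $P(x_{i}^{n}\mid\{x\}_{\mathcal{A}_{i}}^{n})\doteq 2^{-nH(X_{i}\mid\{X\}_{\mathcal{A}_{i}})}$ and, for the same tuples, $P_{i}(x_{i}^{n})\doteq 2^{-nH(X_{i})}$, with all entropies evaluated under $P^{*}$. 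Multiplying over $i$ shows that the likelihood ratio $Q_{c}/Q_{m}$ takes, up to subexponential factors, a single value $2^{n\sum_{i}(H(X_{i})-H(X_{i}\mid\{X\}_{\mathcal{A}_{i}}))}=2^{n\sum_{i}I(X_{i};\{X\}_{\mathcal{A}_{i}})}$ on all of $\mathcal{E}_{n}$.

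Then I would sum over $\mathcal{E}_{n}$, writing $Q_{c}(\mathcal{E}_{n})=\sum_{t\in\mathcal{E}_{n}}\frac{Q_{c}(t)}{Q_{m}(t)}Q_{m}(t)\doteq 2^{n\sum_{i}I(X_{i};\{X\}_{\mathcal{A}_{i}})}\,Q_{m}(\mathcal{E}_{n})$, and invoke Lemma~\ref{lem:PT}, namely $Q_{m}(\mathcal{E}_{n})\doteq 2^{-n(\sum_{i}H(X_{i})-H(P^{*}))}$. Adding the two exponents collapses $\sum_{i}I(X_{i};\{X\}_{\mathcal{A}_{i}})-\sum_{i}H(X_{i})+H(P^{*})$ to $-\sum_{i}H(X_{i}\mid\{X\}_{\mathcal{A}_{i}})+H(P^{*})$, which is precisely (\ref{eq:cond_typ_eq}). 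Equivalently, one may argue by direct counting: Lemma~\ref{lem:PT} yields $|\mathcal{E}_{n}|\doteq 2^{nH(P^{*})}$, and by the local estimate each tuple in $\mathcal{E}_{n}$ has $Q_{c}$-probability $\doteq 2^{-n\sum_{i}H(X_{i}\mid\{X\}_{\mathcal{A}_{i}})}$, so the product gives the same exponent.

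The main obstacle is the local estimate together with honest $\doteq$ bookkeeping. One must verify that joint typicality within each $\mathcal{S}_{j}$ genuinely propagates to joint typicality of every parent--child block $(X_{i},\{X\}_{\mathcal{A}_{i}})$ used in the generation order --- this is exactly where $\{i\}\cup\mathcal{A}_{i}\subseteq\mathcal{S}_{j}$ is indispensable, since otherwise the conditional is neither well defined nor pinned down by the events --- and that the resulting coordinatewise AEP estimates hold uniformly over $\mathcal{E}_{n}$ with one common $\delta(\epsilon)\to 0$, so that summing over the exponentially many tuples of $\mathcal{E}_{n}$ does not corrupt the first-order exponent. Strong typicality also ensures each tuple in $\mathcal{E}_{n}$ has strictly positive $Q_{c}$-probability, so the change-of-measure step is valid term by term.
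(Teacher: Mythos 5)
Your proposal is correct, but it takes a genuinely different route from the paper. The paper's proof is a one-line adaptation of the proof of Lemma~\ref{lem:PT}: it re-applies Sanov's theorem with the reference measure replaced by $Q(\cdot)=\prod_{i=1}^{N}P(X_{i}|\{X\}_{\mathcal{A}_{i}})$, which is exactly the per-coordinate law of the conditional generation scheme, and observes that because $(i,\mathcal{A}_{i})$ lies inside some $\mathcal{S}_{j}$ and the per-subset densities are consistent, the cross-entropy term $-\sum\tilde{P}\log Q$ is constant over the constraint set $\mathcal{E}$; hence the Sanov minimizer is still the maximum-entropy distribution $P^{*}$ and $D(P^{*}\|Q)=\sum_{i}H(X_{i}|\{X\}_{\mathcal{A}_{i}})-H(P^{*})$. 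You instead keep the product-of-marginals measure from Lemma~\ref{lem:PT} and transfer its conclusion to the conditional measure by a change of measure on the typical set, using the uniform AEP estimate $Q_{c}(t)/Q_{m}(t)\doteq2^{n\sum_{i}I(X_{i};\{X\}_{\mathcal{A}_{i}})}$ for every tuple $t$ in the event. Both arguments are sound and rely on the same two facts (that $\{i\}\cup\mathcal{A}_{i}\subseteq\mathcal{S}_{j}$ pins down the conditionals, and that $P^{*}$ has the prescribed subset marginals); the paper's version is shorter because it never leaves the Sanov framework, while yours is more self-contained in that it treats Lemma~\ref{lem:PT} as a black box and only needs elementary strong-typicality estimates --- at the price of the uniformity and positivity bookkeeping, which you correctly identify and handle.
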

\begin{proof}
The proof follows in very similar lines to that of Lemma \ref{lem:PT}
by setting $Q(\cdot)=\prod_{i=1}^{N}P(X_{i}|X_{\mathcal{A}_{i}})$,
as conditioning on $x_{\mathcal{A}_{i}}^{n}$ only introduces further
constraints, which are redundant, as $P_{\mathcal{S}_{j}}(\{X\}_{\mathcal{S}_{j}})$
are consistent with each other and $(i,\mathcal{A}_{i})\in\mathcal{S}_{j}$
for some $j\in\{1,\ldots,M\}$. 
\end{proof}

\subsection{Simultaneous Optimality of Functionals}

In this section we will show that simultaneous optimality of all function
$H(\tilde{P}(\{X\}_{\mathcal{J}}))$ is in fact achievable leading
to a new achievable rate region for the problem stated in the introduction. 
\begin{thm}
\label{thm:main_tm} Let random variables $(X_{1},X_{2},\ldots X_{N})$,
sets $\mathcal{S}_{1},\mathcal{S}_{2}\ldots\mathcal{S}_{M}$ and joint
densities $P_{\mathcal{S}_{j}}(\{X\}_{\mathcal{S}_{j}})$ be defined
as in Lemma \ref{lem:PT}. For each $i\in\{1,2\ldots,M\}$, let $x_{i}^{n}(m_{i})$
$m_{i}\in\{1,\ldots,2^{nR_{i}}\}$ be independent sequences drawn
iid according to the respective marginals, i.e., $x_{i}^{n}(m_{i})\sim\prod_{l=1}^{n}P_{i}(x_{il}(m_{i}))$
$\forall m_{i}\in\{1,\ldots,2^{nR_{i}}\}$. Then $\forall\epsilon>0$,
$\exists\delta(\epsilon)$ such that $\delta(\epsilon)\rightarrow0$
as $\epsilon\rightarrow0$ and,\begin{eqnarray}
P\biggl(\{x\}_{\mathcal{S}_{j}}^{n}\left(\{m\}_{\mathcal{S}_{j}}\right)\in\mathcal{T}_{\epsilon}^{n}\left(P_{\mathcal{S}_{j}}(\{X\}_{\mathcal{S}_{j}})\right)\,\,\forall j\nonumber \\
\mbox{ for some }\{m_{1},m_{2}\ldots,m_{N}\}\biggr)\geq1-\delta(\epsilon)\label{eq:main_tm_prob}\end{eqnarray}
if, $(R_{1},R_{2}\ldots,R_{N})$ satisfy the following conditions
$\forall\mathcal{J}\subseteq\{1,2,\ldots,N\}$:\begin{equation}
\sum_{i\in\mathcal{J}}R_{i}\geq\sum_{i\in\mathcal{J}}H(X_{i})-H^{*}(\{X\}_{\mathcal{J}})+\epsilon\label{eq:main_tm_rate_cond}\end{equation}
where, \begin{equation}
H^{*}\left(\{X\}_{\mathcal{J}}\right)=\max_{\tilde{P}(\{X\}_{\mathcal{J}})}H\left(\tilde{P}(\{X\}_{\mathcal{J}})\right)\label{eq:main_tm_max_entropy}\end{equation}
where $\tilde{P}(\{X\}_{\mathcal{J}})$ satisfies:\begin{eqnarray}
\tilde{P}\left(\{X\}_{\mathcal{J}\cap\mathcal{S}_{j}}\right)=P\left(\{X\}_{\mathcal{J}\cap\mathcal{S}_{j}}\right)\forall j\in\{1\ldots M\}\label{eq:main_tm_prp_cond}\end{eqnarray}
We denote the rate region in (\ref{eq:main_tm_rate_cond}) by $\mathcal{R}_{a}^{*}$.\end{thm}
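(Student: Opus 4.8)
\emph{Proof plan.} I would argue by the second moment method. Let $N$ denote the number of index tuples $(m_{1},\ldots,m_{N})$ for which $\{x\}_{\mathcal{S}_{j}}^{n}(\{m\}_{\mathcal{S}_{j}})\in\mathcal{T}_{\epsilon}^{n}(P_{\mathcal{S}_{j}})$ holds for every $j$. Since $P(N=0)\le\mathrm{Var}(N)/(\mathbb{E}[N])^{2}$, it is enough to establish $\mathbb{E}[N]\to\infty$ and $\mathbb{E}[N^{2}]\le(1+\delta(\epsilon))(\mathbb{E}[N])^{2}$, which together give (\ref{eq:main_tm_prob}). For the first moment, symmetry and independence of the codebooks give $\mathbb{E}[N]=2^{n\sum_{i}R_{i}}\,P(\text{a fixed tuple is jointly typical within all subsets})$, and Lemma~\ref{lem:PT} evaluates this probability as $\doteq2^{-n(\sum_{i}H(X_{i})-H(P^{*}))}$ with $P^{*}$ the global maximum-entropy distribution of (\ref{eq:PT_Given_1}); note $H(P^{*})=H^{*}(\{X\}_{\{1,\ldots,N\}})$. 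The $\mathcal{J}=\{1,\ldots,N\}$ case of (\ref{eq:main_tm_rate_cond}), the slack $\epsilon$ supplying strictness, then forces $\mathbb{E}[N]\to\infty$.

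For the second moment I would split the ordered pairs of tuples according to their agreement set $\mathcal{K}=\{i:m_{i}=m_{i}'\}$; a given $\mathcal{K}$ is realized by $\doteq2^{n(\sum_{i}R_{i}+\sum_{i\notin\mathcal{K}}R_{i})}$ pairs. Conditioned on $\mathcal{K}$, ``both tuples good'' is precisely a subset-typicality event for the enlarged ensemble in which the variables indexed by $\mathcal{K}^{c}$ occur in two independent copies while those indexed by $\mathcal{K}$ are shared: the constraints come in pairs $(\mathcal{S}_{j},\mathcal{S}_{j}')$, each again imposing $P_{\mathcal{S}_{j}}$, and they are mutually consistent because each copy is consistent by hypothesis and the two copies overlap only in the (agreeing) $\mathcal{K}$-marginals. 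Lemma~\ref{lem:PT} applied to this ensemble gives $P(\text{both good}\mid\mathcal{K})\doteq2^{-n(2\sum_{i}H(X_{i})-\sum_{i\in\mathcal{K}}H(X_{i})-H(P_{\mathcal{K}}^{**}))}$, where $P_{\mathcal{K}}^{**}$ is the maximum-entropy distribution of the enlarged ensemble. Comparing exponents with $(\mathbb{E}[N])^{2}\doteq2^{2n(\sum_{i}R_{i}-\sum_{i}H(X_{i})+H^{*}(\{X\}_{\{1,\ldots,N\}}))}$, the contribution of $\mathcal{K}$ is $o((\mathbb{E}[N])^{2})$ exactly when $\sum_{i\in\mathcal{K}}R_{i}>\sum_{i\in\mathcal{K}}H(X_{i})+H(P_{\mathcal{K}}^{**})-2H^{*}(\{X\}_{\{1,\ldots,N\}})$, while the term $\mathcal{K}=\{1,\ldots,N\}$ contributes only $\mathbb{E}[N]=o((\mathbb{E}[N])^{2})$.

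The proof thus reduces to the entropy inequality
\[
H(P_{\mathcal{K}}^{**})\ \le\ 2\,H^{*}\bigl(\{X\}_{\{1,\ldots,N\}}\bigr)-H^{*}\bigl(\{X\}_{\mathcal{K}}\bigr),\qquad\mathcal{K}\subsetneq\{1,\ldots,N\},
\]
for then the displayed lower bound on $\sum_{i\in\mathcal{K}}R_{i}$ becomes precisely the $\mathcal{J}=\mathcal{K}$ instance of (\ref{eq:main_tm_rate_cond}). To prove it I would first invoke strict concavity of entropy and the symmetry of the constraints under exchanging the two copies to conclude that the maximizer $P_{\mathcal{K}}^{**}$ is exchangeable across the copies and renders the two copies of $\{X\}_{\mathcal{K}^{c}}$ conditionally independent given $\{X\}_{\mathcal{K}}$; this collapses $H(P_{\mathcal{K}}^{**})$ to $\max_{\tilde{P}}[\,2H(\tilde{P})-H(\tilde{P}(\{X\}_{\mathcal{K}}))\,]$ over $\tilde{P}$ satisfying the original per-subset constraints. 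Writing $2H(\tilde{P})-H(\tilde{P}(\{X\}_{\mathcal{K}}))=H(\tilde{P})+H(\tilde{P}(\{X\}_{\mathcal{K}^{c}}\mid\{X\}_{\mathcal{K}}))$ and using $H(\tilde{P})\le H^{*}(\{X\}_{\{1,\ldots,N\}})$, what remains is to bound the conditional term of the optimizing $\tilde{P}$ by $H^{*}(\{X\}_{\{1,\ldots,N\}})-H^{*}(\{X\}_{\mathcal{K}})$ --- equivalently, to show that its $\mathcal{K}$-marginal may be swapped for the one attaining $H^{*}(\{X\}_{\mathcal{K}})$ while keeping all subset marginals intact. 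This re-wiring of the $\mathcal{K}$-marginal, where the assumed existence of a consistent joint distribution is genuinely used, is the step I expect to be the main obstacle.

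Two routine matters would then close the argument. First, the ``$\doteq$'' estimates conceal factors $2^{\pm n\delta(\epsilon)}$, so the explicit slack $\epsilon$ in (\ref{eq:main_tm_rate_cond}) must be chosen large relative to $\delta(\epsilon)$ in order to turn a strictly negative exponent into an honest bound $\mathrm{Var}(N)\le\delta'(\epsilon)(\mathbb{E}[N])^{2}$ with $\delta'(\epsilon)\to0$. Second, there are only $2^{N}$ agreement sets $\mathcal{K}$, a constant, so summing the above per-$\mathcal{K}$ estimates costs nothing, and Chebyshev's inequality delivers (\ref{eq:main_tm_prob}).
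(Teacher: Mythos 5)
Your skeleton --- the second moment method, the first moment evaluated via Lemma~\ref{lem:PT}, and the splitting of ordered pairs of index tuples by their agreement set --- is the same as the paper's. The divergence, and the gap, is in how you evaluate the pair probability. Applying Lemma~\ref{lem:PT} to the doubled ensemble and using the conditional-independence/exchangeability reduction, you correctly arrive at $H(P_{\mathcal{K}}^{**})=\max_{\tilde{P}}\,[\,2H(\tilde{P})-H(\tilde{P}(\{X\}_{\mathcal{K}}))\,]$ over feasible $\tilde{P}$. But the inequality you then need,
\[
H(P_{\mathcal{K}}^{**})\ \le\ 2H^{*}\bigl(\{X\}_{\{1,\ldots,N\}}\bigr)-H^{*}\bigl(\{X\}_{\mathcal{K}}\bigr),
\]
is false in general. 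Substituting $\tilde{P}=P^{*}$ (the global entropy maximizer) gives $H(P_{\mathcal{K}}^{**})\ge2H(P^{*})-H(P^{*}(\{X\}_{\mathcal{K}}))$, which strictly exceeds the right-hand side whenever $H(P^{*}(\{X\}_{\mathcal{K}}))<H^{*}(\{X\}_{\mathcal{K}})$, i.e.\ whenever the $\mathcal{K}$-marginal of the global maximizer does not itself attain the restricted maximum. That this happens is precisely the phenomenon the paper is built on: in the four-variable example of Theorem~\ref{lem:eg}, with $\mathcal{K}=\{1,2,3\}$, one has $H^{*}(\{X\}_{\mathcal{K}})=3$ bits (pairwise independence permits mutual independence), while the paper proves that \emph{no} feasible joint on all four variables makes $(X_{1},X_{2},X_{3})$ mutually independent, so $H(\tilde{P}(\{X\}_{\mathcal{K}}))<3$ for every feasible $\tilde{P}$. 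Your inequality therefore fails on the paper's own example, and the ``re-wiring'' you propose --- swapping the $\mathcal{K}$-marginal of the optimizer for the restricted max-entropy marginal while preserving all $\mathcal{S}_{j}$-marginals --- is impossible for the same reason. The step you flagged as the main obstacle is not merely hard; it is an obstruction, and the plan does not close.

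For comparison, the paper sidesteps your inequality by different bookkeeping: it writes the pair probability as $P(\mathcal{E}(\{m\}_{\mathcal{Q}}))\bigl(P(\mathcal{E}(\{m\}_{\mathcal{N}}))/P(\mathcal{E}(\{m\}_{\mathcal{Q}}))\bigr)^{2}$ in (\ref{eq:P_6-1})--(\ref{eq:P_6}), cancels the $P(\mathcal{E}(\{m\}_{\mathcal{N}}))^{2}$ factor against $(E\chi)^{2}$, and then needs only the Lemma~\ref{lem:PT} \emph{lower} bound on $P(\mathcal{E}(\{m\}_{\mathcal{Q}}))$, which is what produces $H^{*}(\{X\}_{\mathcal{Q}})$ in the rate conditions. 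You should be aware, however, that your Sanov evaluation of the pair probability is the honest one: writing $g(\{x\}_{\mathcal{Q}}^{n})$ for the conditional completion probability, the pair probability is $E[g^{2}]$, and since $g$ depends exponentially on the joint type of the shared sequences (which varies over the feasible types in $\mathcal{E}(\{m\}_{\mathcal{Q}})$), Cauchy--Schwarz gives $E[g^{2}]\ge E[g]^{2}/P(\mathcal{E}(\{m\}_{\mathcal{Q}}))$ with possibly strict inequality in the exponent; the paper's factorization is an identity only if $g$ is essentially constant on its support. So the tension your computation exposes is real: either that constancy must be argued, or the second-moment bound must be organized differently. As submitted, your proposal rests on an entropy inequality that cannot be repaired as stated.
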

\begin{rem}
Note that $H^{*}(\{X\}_{\mathcal{J}})=H(P(\{X\}_{\mathcal{J}}))$
if $\mathcal{J}\subseteq\mathcal{S}_{j}$ for some $\mathcal{S}_{j}$.
Hence for all $\mathcal{J}$ such that $\mathcal{J}\subseteq\mathcal{S}_{j}$
for some $j$, the corresponding inequalities in Theorem \ref{thm:main_tm}
and equations (\ref{eq:MCL2}) are the same. However this theorem
asserts that for every other $\mathcal{J}$, the functionals in (\ref{eq:MCL2})
can be replaced with the `maximum joint entropy' subject to the given
subset distributions which involve only the random variables $\{X\}_{\mathcal{J}}$.
It is very important to note that the maximum entropy distributions
for two different subsets $X_{\mathcal{J}_{1}}$ and $X_{\mathcal{J}_{2}}$,
$\mathcal{J}_{1},\mathcal{J}_{2}\subseteq\{1,2,\ldots,N\}$, may not
even correspond to any valid joint distribution over $(X_{1},X_{2},\ldots,X_{N})$.
This is precisely what provides the additional leeway in achieving
points which are strictly outside (\ref{eq:MCL2}) as illustrated
in Theorem \ref{lem:eg}. A pictorial representation of the above
theorem is shown in Fig. \ref{fig:Pict_Sanov}.\end{rem}
\begin{proof}
We are interested in finding conditions on rates so that the probability
in (\ref{eq:main_tm_prob}) approaches $1$. Denote the event $\mathcal{E}=\nexists\{m_{1},m_{2}\ldots,m_{N}\}:\{x\}_{\mathcal{S}_{j}}^{n}\left(\{m\}_{\mathcal{S}_{j}}\right)\in\mathcal{T}_{\epsilon}^{n}\left(P_{\mathcal{S}_{j}}(\{X\}_{\mathcal{S}_{j}})\right)\,\,\forall j$.
We want to make $P(\mathcal{E})\rightarrow0$. Let $\mathcal{N}$
denote the set $\{1,2,\ldots,N\}$ and let $(m_{1},m_{2},\ldots,m_{N})=\{m\}_{\mathcal{N}}$
be an index tuple, one from each codebook, such that $m_{i}\in\{1,\ldots,2^{nR_{i}}\}$.
Let $\mathcal{E}(\{m\}_{\mathcal{N}})$ denote the event that $\{x\}_{\mathcal{S}_{j}}^{n}\left(\{m\}_{\mathcal{S}_{j}}\right)\in\mathcal{T}_{\epsilon}^{n}\left(P_{\mathcal{S}_{j}}(\{X\}_{\mathcal{S}_{j}})\right)$
$\forall j$. Define random variables $\chi(\{m\}_{\mathcal{N}})$
such that:\begin{equation}
\chi(\{m\}_{\mathcal{N}})=\begin{cases}
1 & \mbox{if}\,\,\mathcal{E}(\{m\}_{\mathcal{N}})\,\,\mbox{occurs}\\
0 & \mbox{else}\end{cases}\label{eq:P_1}\end{equation}
and random variable $\chi=\sum_{\{m\}_{\mathcal{N}}}\chi(\{m\}_{\mathcal{N}})$.
Then we have $P(\mathcal{E})=P(\chi=0)$. From Chebyshev's inequality,
it follows that:\begin{eqnarray}
P(\mathcal{E})=P(\chi=0)\leq P\left[|\chi-E(\chi)|\geq E(\chi)/2\right]\label{eq:P_3}\\
\leq\frac{4\mbox{Var}(\chi)}{\left(E(\chi)\right)^{2}}=\frac{4\left(E(\chi^{2})-\left(E(\chi)\right)^{2}\right)}{\left(E(\chi)\right)^{2}}\nonumber \end{eqnarray}
We next bound $E(\chi)$ and $E(\chi^{2})$ using Lemma \ref{lem:PT}.
First we write $E(\chi)$ as:\begin{eqnarray}
E(\chi) & = & 2^{n\sum_{i=1}^{N}R_{i}}P(\mathcal{E}(\{m\}_{\mathcal{N}}))\label{eq:P_4}\end{eqnarray}
for any $\{m\}_{\mathcal{N}}$ because all the sequences are drawn
independent of each other. Next towards bounding $E(\chi^{2})$, note
that:\begin{equation}
E(\chi^{2})=\sum_{\{m\}_{\mathcal{N}}}\sum_{\{l\}_{\mathcal{N}}}P\left(\mathcal{E}(\{m\}_{\mathcal{N}}),\mathcal{E}(\{l\}_{\mathcal{N}})\right)\label{eq:P_5}\end{equation}
Let $\{m\}_{\mathcal{Q}}=\{l\}_{\mathcal{Q}}$ and $\{m\}_{\mathcal{N}-\mathcal{Q}}\neq\{l\}_{\mathcal{N}-\mathcal{Q}}$
for some $\mathcal{Q}\subseteq\mathcal{N},\mathcal{Q}\neq\phi$ where
$\phi$ denotes a null-set. Then,\begin{eqnarray}
P\left(\mathcal{E}(\{m\}_{\mathcal{N}}),\mathcal{E}(\{l\}_{\mathcal{N}})\right)=\Biggl\{ P\left(\mathcal{E}(\{m\}_{\mathcal{Q}})\right)\nonumber \\
P\left(\mathcal{E}(\{m\}_{\mathcal{N}})\Bigl|\mathcal{E}(\{m\}_{\mathcal{Q}})\right)^{2}\Biggr\}\label{eq:P_6-1}\end{eqnarray}
where $\mathcal{E}(\{m\}_{\mathcal{Q}})$ denotes the event that $\{x\}_{\mathcal{S}_{j}\cap\mathcal{Q}}^{n}\left(\{m\}_{\mathcal{S}_{j}\cap\mathcal{Q}}\right)\in\mathcal{T}_{\epsilon}^{n}\left(P_{\mathcal{S}_{j}\cap\mathcal{Q}}(\{X\}_{\mathcal{S}_{j}\cap\mathcal{Q}})\right)$
$\forall j$, as conditional on $\{x\}_{\mathcal{Q}}^{n}(\{m\}_{\mathcal{Q}})$,
sequences $\{x\}_{\mathcal{N}-\mathcal{Q}}^{n}(\{m\}_{\mathcal{N}-\mathcal{Q}})$
and $\{x\}_{\mathcal{N}-\mathcal{Q}}^{n}$ $(\{l\}_{\mathcal{N}-\mathcal{Q}})$
are drawn independently from the same distribution. The above expression
can be rewritten as:\begin{eqnarray}
P\left(\mathcal{E}(\{m\}_{\mathcal{N}}),\mathcal{E}(\{l\}_{\mathcal{N}})\right)=\Biggl\{ P\left(\mathcal{E}(\{m\}_{\mathcal{Q}})\right)\nonumber \\
\times\left(\frac{P\left(\mathcal{E}(\{m\}_{\mathcal{N}})\right)}{P\left(\mathcal{E}(\{m\}_{\mathcal{Q}})\right)}\right)^{2}\Biggr\}\label{eq:P_6}\end{eqnarray}
If $\mathcal{Q}=\phi$, we have $P\left(\mathcal{E}(\{m\}_{\mathcal{N}}),\mathcal{E}(\{l\}_{\mathcal{N}})\right)=\left(P\left(\mathcal{E}(\{m\}_{\mathcal{N}})\right)\right)^{2}$.
Hence, we can write $Var(\chi)$ as:\begin{eqnarray}
Var(\chi)=\sum_{\mathcal{Q}\subseteq\mathcal{N},\mathcal{Q}\neq\phi}\Biggl\{2^{n\sum_{i\in\mathcal{Q}}R_{i}+2n\sum_{i\in\mathcal{N}-\mathcal{Q}}R_{i}}\nonumber \\
\times P\left(\mathcal{E}(\{m\}_{\mathcal{Q}})\right)\left(\frac{P\left(\mathcal{E}(\{m\}_{\mathcal{N}})\right)}{P\left(\mathcal{E}(\{m\}_{\mathcal{Q}})\right)}\right)^{2}\Biggr\}\label{eq:P_7}\end{eqnarray}
Note that the $\mathcal{Q}=\phi$ term gets cancelled with the $ $`$\left(E(\chi)\right)^{2}$'
terms in $Var(\chi)$ (see \cite{VKG} for a similar argument).

On substituting (\ref{eq:P_4}) and (\ref{eq:P_7}) in (\ref{eq:P_3}),
and noting that for any $\mathcal{Q}\subseteq\mathcal{N}$, $\mathcal{Q}\neq\phi$,
we can write $P(\mathcal{E}(\{m\}_{\mathcal{N}}))=P\left(\mathcal{E}(\{m\}_{\mathcal{Q}})\right)\frac{P\left(\mathcal{E}(\{m\}_{\mathcal{N}})\right)}{P\left(\mathcal{E}(\{m\}_{\mathcal{Q}})\right)}$,
we have:\begin{equation}
P(\mathcal{E})\leq4\sum_{\mathcal{Q}\subseteq\mathcal{N},\mathcal{Q}\neq\phi}2^{-n\sum_{i\in\mathcal{Q}}R_{i}}\left(P(\mathcal{E}(\{m\}_{\mathcal{Q}}))\right)^{-1}\label{eq:P8}\end{equation}
Next, invoking Lemma \ref{lem:PT}, we bound $P(\mathcal{E}(\{m\}_{\mathcal{Q}}))$
as:\begin{equation}
P(\mathcal{E}(\{m\}_{\mathcal{Q}}))\geq2^{-n\left(\sum_{i\in\mathcal{Q}}H(X_{i})-H^{*}(\{X\}_{\mathcal{Q}}))\right)-n\delta(\epsilon)}\label{eq:P9}\end{equation}
On substituting (\ref{eq:P9}) in (\ref{eq:P8}), it follows that
$P(\mathcal{E})\rightarrow0$ as $n\rightarrow\infty$ if $R_{i}$
satisfy (\ref{eq:main_tm_rate_cond}).
\end{proof}
\begin{figure}
\centering\includegraphics[scale=0.25]{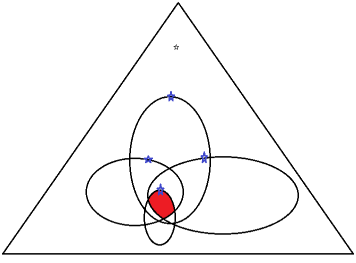}\caption{\label{fig:Pict_Sanov}Pictorial representation of Theorem \ref{thm:main_tm}:
The triangle denotes the simplex of all joint distributions over $(X_{1},X_{2},\ldots,X_{N})$.
The black star denotes the joint distribution representing the product
of marginals (codebook generation). Each loop represent the set of
all joint distributions satisfying the conditions imposed on $\{X\}_{\mathcal{J}}$
for some $\mathcal{J}$. The intersection of all the loops (red region)
represents the set of joint distributions satisfying all the conditions.
The blue stars represent the joint distributions which maximize functionals
$H(P(\{X\}_{\mathcal{J}}))$ (equivalently, minimize the relative
entropy with the product of marginals as seen from Sanov's theorem)
subject to the conditions on $\{X\}_{\mathcal{J}}$. Theorem \ref{thm:main_tm}
asserts that a separate joint distribution for each $\mathcal{J}$
can be chosen from the corresponding loop (blue stars) and hence all
the functionals $H(P(\{X\}_{\mathcal{J}}))$ can be set to their respective
maxima simultaneously.}

\end{figure}

\subsection{Strict Improvement}
\begin{thm}
\label{lem:eg}(i) The region in Theorem \ref{thm:main_tm} subsumes
the region in (\ref{eq:MCL3}). i.e,\begin{equation}
\mathcal{R}_{a}\subseteq\mathcal{R}_{a}^{*}\label{eq:strict_l_1}\end{equation}
(ii) There exist scenarios for which the region in Theorem \ref{thm:main_tm}
can be strictly larger than the region in (\ref{eq:MCL3}). i.e.,\begin{equation}
\mathcal{R}_{a}^{*}\supset\mathcal{R}_{a}\label{eq:strick_l_2}\end{equation}
\end{thm}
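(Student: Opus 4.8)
The plan is to prove part (i) by a pointwise comparison of the two families of entropy functionals, and part (ii) by exhibiting a small explicit example (three or four binary random variables suffices) where the inequality is strict. For part (i), I would fix an arbitrary joint density $\tilde P(X_1,\ldots,X_N)$ satisfying $\tilde P(\{X\}_{\mathcal{S}_j})=P_{\mathcal{S}_j}(\{X\}_{\mathcal{S}_j})$ for all $j$, so that its associated rate tuple lies in $\mathcal{R}_a$. The key observation is that for every $\mathcal{J}\subseteq\{1,\ldots,N\}$ the marginal $\tilde P(\{X\}_{\mathcal{J}})$ is itself a \emph{feasible} point of the maximization defining $H^*(\{X\}_{\mathcal{J}})$ in (\ref{eq:main_tm_max_entropy})--(\ref{eq:main_tm_prp_cond}), because $\tilde P(\{X\}_{\mathcal{J}\cap\mathcal{S}_j}) = P(\{X\}_{\mathcal{J}\cap\mathcal{S}_j})$ follows immediately by further marginalizing the subset constraints on $\tilde P$. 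Hence $H(\tilde P(\{X\}_{\mathcal{J}})) \le H^*(\{X\}_{\mathcal{J}})$ for every $\mathcal{J}$, so each constraint in (\ref{eq:main_tm_rate_cond}) is \emph{looser} than the corresponding constraint in (\ref{eq:MCL3}). Therefore the rate region carved out by $\tilde P$ inside $\mathcal{R}_a$ is contained in $\mathcal{R}_a^*$; taking the union over all such $\tilde P$ and then the convex closure (noting $\mathcal{R}_a^*$ is already closed and convex, being an intersection of half-spaces) gives $\mathcal{R}_a\subseteq\mathcal{R}_a^*$.

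For part (ii), the plan is to pick $N=3$ (or $N=4$), choose the subsets to be the pairs $\mathcal{S}_1=\{1,2\}$, $\mathcal{S}_2=\{2,3\}$, $\mathcal{S}_3=\{1,3\}$, and prescribe pairwise distributions $P_{\mathcal{S}_j}$ for which \emph{no} single joint distribution over $(X_1,X_2,X_3)$ is simultaneously consistent with all three pairwise laws at their individually-maximal-entropy values. Concretely, for the full set $\mathcal{J}=\{1,2,3\}$ the value $H^*(\{X\}_{\{1,2,3\}})$ is the max-entropy joint subject to all three pairwise marginals, which is what $\mathcal{R}_a$ already uses; but for a pair such as $\mathcal{J}=\{1,2\}$, $H^*(\{X\}_{\{1,2\}})=H(P_{\mathcal{S}_1})$, and any $\tilde P$ admissible for $\mathcal{R}_a$ has marginal on $\{1,2\}$ exactly $P_{\mathcal{S}_1}$, so the pairwise constraints coincide. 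So strictness cannot come from the pairs; instead I would use a configuration with, say, $\mathcal{S}_1=\{1,2\}$, $\mathcal{S}_2=\{3\}$ and look at $\mathcal{J}=\{1,3\}$: here $H^*(\{X\}_{\{1,3\}}) = H(X_1)+H(X_3)$ (no constraint couples $X_1$ and $X_3$), whereas an admissible $\tilde P$ for $\mathcal{R}_a$ may be forced to correlate $X_1$ and $X_3$ through a chain of overlapping constraints, making $H(\tilde P(\{X\}_{\{1,3\}})) < H(X_1)+H(X_3)$ strictly for every admissible $\tilde P$. I would engineer the remaining subset constraints (e.g.\ a fourth variable $X_4$ with $\mathcal{S}_3=\{1,4\}$, $\mathcal{S}_4=\{3,4\}$ and correlations that propagate) so that this gap is bounded away from zero uniformly over admissible $\tilde P$, and then exhibit one specific rate tuple satisfying (\ref{eq:main_tm_rate_cond}) but violating (\ref{eq:MCL3}) for every choice of $\tilde P$.

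The main obstacle is part (ii): verifying that the strict gap holds \emph{for every} admissible $\tilde P$ simultaneously, not just for one. This requires checking that the minimum over all consistent joint densities $\tilde P$ of the relevant entropy $H(\tilde P(\{X\}_{\mathcal{J}}))$ still exceeds (or, in the direction that matters, the max still falls short of) the decoupled value $H^*(\{X\}_{\mathcal{J}})$ — essentially a compactness/continuity argument on the (nonempty, by the standing assumption) polytope of consistent joint distributions, combined with an explicit numerical separation for the chosen example. Once the example is fixed, this reduces to a finite linear-feasibility computation plus evaluating a handful of entropies, which is routine; the creative step is selecting subsets and marginals that force the decoupling to be genuinely exploitable, i.e.\ where the max-entropy points for different $\mathcal{J}$ provably cannot be realized by a common joint law.
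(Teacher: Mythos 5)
Your part (i) is correct and is exactly the paper's argument: every admissible $\tilde P$ satisfies $\tilde P(\{X\}_{\mathcal{J}\cap\mathcal{S}_j})=P(\{X\}_{\mathcal{J}\cap\mathcal{S}_j})$ by marginalization, hence its $\mathcal{J}$-marginal is feasible for the maximization defining $H^{*}(\{X\}_{\mathcal{J}})$, so $H(\tilde P(\{X\}_{\mathcal{J}}))\leq H^{*}(\{X\}_{\mathcal{J}})$ and every inequality in (\ref{eq:main_tm_rate_cond}) is implied by the corresponding inequality in (\ref{eq:MCL3}); the passage to the union over $\tilde P$ and the convex closure is harmless since $\mathcal{R}_a^{*}$ is an intersection of half-spaces. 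Your side observation that no separation can occur for $N=3$ with the three pairs as subsets is also correct.

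The gap is in part (ii): what you have is a strategy, and the entire substance of part (ii) is precisely the two steps you defer, namely (a) committing to explicit alphabets, subsets and per-subset laws, and (b) exhibiting a rate point of $\mathcal{R}_a^{*}$ together with a single linear inequality $\sum_{i\in\mathcal{J}}R_i\geq c>0$ that holds for \emph{every} admissible $\tilde P$ (and hence on the convex hull $\mathcal{R}_a$, since a half-space containing each $\tilde P$-region contains their hull) but fails at that point. Your worry about uniformity over $\tilde P$ is not a real obstacle if the example is chosen so that the constraints themselves forbid the decoupled entropy value: your own chain idea already does this, e.g.\ $X_1,X_3,X_4$ uniform binary with $P_{\{1,4\}}$ and $P_{\{3,4\}}$ both diagonal forces $X_1=X_3$ under every admissible joint law, giving $H(\tilde P(X_1,X_3))=1<2=H^{*}(\{X\}_{\{1,3\}})$ with no compactness argument, after which one only needs to check the remaining inequalities of $\mathcal{R}_a^{*}$ at a point such as $(R_1,R_3,R_4)=(0,0,2)$. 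The paper's example is of the same flavor but different in detail: four binary variables with all six pairs constrained, $(X_1,X_2,X_3)$ pairwise independent and each $(X_i,X_4)$ given by a fixed table; the corner point $(0,0,0,R_4)$ of $\mathcal{R}_a^{*}$ would require an admissible $\tilde P$ making $(X_1,X_2,X_3)$ mutually independent, and a short computation shows the induced conditional probability $\alpha_{111}$ would have to equal $2$, a contradiction. Until you fix one such example and carry out the corresponding finite verification, part (ii) is not proved.
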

\begin{proof}
The first half of the Theorem follows directly because $H^{*}(\{X\}_{\mathcal{J}})\geq H(\{X\}_{\mathcal{J}})\,\,\forall\mathcal{J}$
for any joint distribution satisfying the given distributions within
subsets. To prove (ii) we provide an example for which $\mathcal{R}_{a}^{*}$
has points which are not part of $\mathcal{R}_{a}$. Consider the
following example of $4$ binary random variables $(X_{1},X_{2},X_{3},X_{4})$.
$X_{1},X_{2}$ and $X_{3}$ are distributed $bern(\frac{1}{2})$ and
$X_{4}$ is distributed $bern(\frac{3}{4})$, where $bern(p)$ denotes
a Bernoulli random variable with $P(0)=p$ and $P(1)=1-p$. Let $\mathcal{S}_{1},\mathcal{S}_{2}\ldots,\mathcal{S}_{6}$
be all possible subsets of $\{1,2,3,4\}$ of cardinality $2$. Let
$P_{\mathcal{S}_{j}}(\{X\}_{\mathcal{S}_{j}})$ be such that $(X_{1},X_{2},X_{3})$
are pairwise independent and the pairwise PMF of $(X_{i},X_{4})$
$\forall i\in\{1,2,3\}$ is given in Table \ref{tab:Pairwise}. Note
that these pairwise densities are satisfied by at lease one joint
density obtained by the following operations : $X_{3}=X_{1}\oplus X_{2}$
and $X_{4}=X_{1}\bullet X_{2}$, where $X_{1}$ and $X_{2}$ are independent
$bern(\frac{1}{2})$ random variables and `$\oplus$' and `$\bullet$'
denote `bit-exor' and `bit-and' operations respectively. 

\begin{table}
\centering\begin{tabular}{|c|c|c|c|c|}
\hline 
$(x_{i},x_{4})$ & $0,0$ & $0,1$ & $1,0$ & $1,1$\tabularnewline
\hline
\hline 
$P(x_{i},x_{4})$ & $\unitfrac{1}{2}$ & $0$ & $\unitfrac{1}{4}$ & $\unitfrac{1}{4}$\tabularnewline
\hline
\end{tabular}\caption{\label{tab:Pairwise}Pairwise PMF of $(X_{i},X_{4})\,\,\forall i\in\{1,2,3\}$}

\end{table}

Observe that maximizing the entropy over $(X_{1},X_{2},X_{3})$ subject
to their respective pairwise densities makes them mutually independent.
However, there exists \textit{no} joint distribution over $(X_{1},X_{2},X_{3},X_{4})$
satisfying all the pairwise conditions which makes $(X_{1},X_{2},X_{3})$
mutually independent. This intuition is in fact sufficient to see
that $\mathcal{R}_{a}^{*}\supset\mathcal{R}_{a}$. However to be more
rigorous, we first rewrite the achievable region $\mathcal{R}_{a}^{*}$
for this example as:\begin{eqnarray}
R_{i}+R_{4} & \geq & H_{b}(\frac{1}{4})-\frac{1}{2}H_{b}(\frac{1}{2})\nonumber \\
R_{i}+R_{j}+R_{4} & \geq & 2+H_{b}(\frac{1}{4})-H^{*}(X_{i},X_{j},X_{4})\nonumber \\
\sum_{i=1}^{4}R_{i} & \geq & 3+H_{b}(\frac{1}{4})-H^{*}(\{X\}_{1,2,3,4})\label{eq:eq_Rs-1}\end{eqnarray}
$\forall i,j\in\{1,2,3\}$ where $H_{b}(\cdot)$ denotes the binary
entropy function and $\{X\}_{1,2,3,4}=\{X_{1},X_{2},X_{3},X_{4}\}$. 

We consider the following corner point of (\ref{eq:eq_Rs-1}), $A=(0,0,0,3+H_{b}(\frac{1}{4})-H^{*}(\{X\}_{1,2,3,4}))$.
It is sufficient for us to prove that $A$$\notin\mathcal{R}_{a}$
. Note that, if $R_{1}=R_{2}=R_{3}=0$, $(X_{1},X_{2},X_{3})$ must
be mutually independent (which in-turn satisfies the pairwise independence
conditions). To prove that $A\notin\mathcal{R}_{a}$ , we will show
that there cannot exist \textit{any} joint PMF over $(X_{1},X_{2},X_{3},X_{4})$
satisfying all pairwise distributions and for which $(X_{1},X_{2},X_{3})$
are mutually independent. Let us suppose that such a joint PMF exists.
Denote the conditional PMF $P(X_{4}=0|x_{1},x_{2},x_{3})=\alpha_{x_{1}x_{2}x_{3}}$,
$x_{1},x_{2},x_{3}\in\{0,1\}$. As $(X_{1},X_{2},X_{3})$ are assumed
to be mutually independent, the joint distribution $P_{X_{1},X_{2},X_{3},X_{4}}(x_{1},x_{2},x_{3},1)=\frac{1-\alpha_{x_{1}x_{2}x_{3}}}{8}$.
The pairwise distribution of $(X_{1},X_{4})$ (from Table \ref{tab:Pairwise})
is such that $P_{X_{i},X_{4}}(0,1)=0$ $\forall i\in\{1,2,3\}$. This
leads to the conclusion that $\alpha_{x_{1}x_{2}x_{3}}=1$ if any
one of $x_{1},x_{2},x_{3}$ is $0$. We are only left with finding
$\alpha_{111}$. Further, we want $P_{X_{1},X_{4}}(1,1)=\frac{1}{4}$,
i.e. $\sum_{x_{2},x_{3}}P_{X_{1},X_{2},X_{3},X_{4}}(1,x_{2},x_{3},1)=\sum_{x_{2},x_{3}}\frac{1-\alpha_{1x_{2}x_{3}}}{8}=\frac{1}{4}$.
One substituting, we have $\alpha_{111}=2$. As $\alpha_{x_{1},x_{2},x_{3}}$s
are conditional probabilities, this leads to a contradiction and proves
that there cannot exist a joint distribution with $(X_{1},X_{2},X_{3})$
being mutually independent. Therefore $\mathcal{R}_{a}^{*}\supset\mathcal{R}_{a}$,
proving the second half of the Theorem. 
\end{proof}

\section{Application to Multi-User Gray-Wyner Network\label{sec:GW}}

\begin{figure}
\centering\includegraphics[scale=0.35]{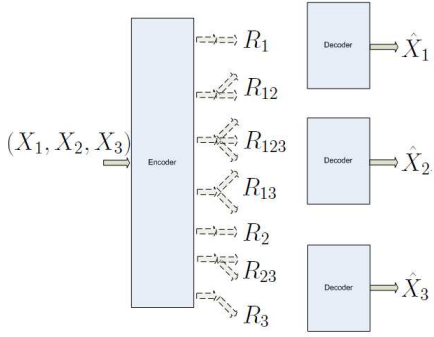}\caption{3-user Gray-Wyner network: There is a unique branch from the encoder
to every subset of the decoders\label{fig:3GW}}

\end{figure}

We finally apply the results in Theorem \ref{thm:main_tm} to obtain
a new achievable region for the multi-user Gray-Wyner network. To
illustrate the applicability and to maintain simplicity in notation,
we only consider the 3-user lossless Gray-Wyner network here. However
the approach can be extended directly to the general $L-$user setting
and to incorporate distortions. Note that the formal definition of
an achievable rate region closely resembles that in \cite{GW}, with
obvious generalization to the 3 user setting as shown in Fig. \ref{fig:3GW}.
We omit the details here due to space constraints. We further note
that the rate region is in general 7 dimensional, with the following
rates: $(R_{1},R_{2},R_{3},R_{12},R_{13},R_{23},R_{123})$.
\begin{cor}
Let $(X_{1},X_{2},X_{3})$ be the random variables with joint distribution
$P(X_{1},X_{2},X_{3})$ observed by the encoder. Let $(U_{123},U_{12},U_{13},U_{23})$
be random variables jointly distributed with $(X_{1},X_{2},X_{3})$
with conditional distribution $P(U_{123},U_{12},U_{13},U_{23}|X_{1},X_{2},X_{3})$
and taking values over arbitrary finite alphabets. Define subsets
$\mathcal{S}_{1}=\{U_{123},U_{12},U_{13}\}$, $\mathcal{S}_{2}=\{U_{123},U_{12},U_{23}\}$,
$\mathcal{S}_{3}=\{U_{123},U_{13},U_{23}\}$. The rate region for
the 3-user lossless Gray-Wyner network contains all the rates such
that $\forall(i,j,k)\in\{1,2,3\}$ and $i<j$, $i<k$, \begin{eqnarray}
R_{123} & \geq & H(U_{123})-H^{*}(U_{123}|\mathbf{X})\nonumber \\
R_{123}+R_{ij} & \geq & H(U_{123},U_{ij})\nonumber \\
 &  & -H^{*}(U_{123},U_{ij}|\mathbf{X})\nonumber \\
R_{123}+R_{ij}+R_{ik} & \geq & H(U_{123})-H^{*}(\{U\}_{123,ij,ik}|\mathbf{X})\nonumber \\
 &  & +H(U_{ij}|U_{123})+H(U_{ik}|U_{123})\nonumber \\
R_{123}+\sum_{i<j}R_{ij} & \geq & H(U_{123})+\sum_{i<j}H(U_{ij}|U_{123})\nonumber \\
 &  & -H^{*}(U_{123},U_{12},U_{23},U_{13}|\mathbf{X})\nonumber \\
R_{i} & \geq & H(X_{i}|\{U\}_{\mathcal{J}:i\in\mathcal{J}})\label{eq:3GW_rate_cond}\end{eqnarray}
where $\mathbf{X}=\{X_{1},X_{2},X_{3}\}$ and $H^{*}(\{U\}_{\mathcal{J}}|\mathbf{X})$
is given by:\begin{equation}
\max_{\tilde{P}(\{U\}_{\mathcal{J}},\{X\}_{1,2,3})}H\left(\tilde{P}\left(\{U\}_{\mathcal{J}}\right)\Bigl|\mathbf{X}\right)\end{equation}
where \textup{$\tilde{P}(\{U\}_{\mathcal{J}}\bigl|\mathbf{X})$} satisfies:\begin{eqnarray}
\tilde{P}\left(\{U\}_{\mathcal{J}\cap\mathcal{S}_{j}},X_{j}\right) & = & P\left(\{U\}_{\mathcal{J}\cap\mathcal{S}_{j}},X_{j}\right)\,\,\forall j\end{eqnarray}
The closure of the achievable rates over all conditional distributions
$P(U_{123},U_{12},$ $U_{13},U_{23}|X_{1},X_{2},X_{3})$ is an achievable
region for the 3-user lossless Gray-Wyner network.\end{cor}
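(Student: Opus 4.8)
The plan is to give a layered (superposition) coding scheme and bound its error probability using the conditional subset covering machinery of Theorem~\ref{thm:main_tm} and Lemma~\ref{cor:PT}. Fix a conditional law $P(U_{123},U_{12},U_{13},U_{23}|X_1,X_2,X_3)$. Generate $2^{nR_{123}}$ codewords $u_{123}^{n}(m_{123})$ i.i.d.\ from $\prod_{l}P(u_{123,l})$; superimposed on each $u_{123}^{n}(m_{123})$, and independently for each pair $ij\in\{12,13,23\}$, generate $2^{nR_{ij}}$ codewords $u_{ij}^{n}(m_{123},m_{ij})$ i.i.d.\ from $\prod_{l}P(u_{ij,l}\,|\,u_{123,l}(m_{123}))$. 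On observing the source block $\mathbf{x}^{n}$ (which is typical with probability $\to1$), the encoder looks for an index tuple $(m_{123},m_{12},m_{13},m_{23})$ such that for every $j\in\{1,2,3\}$ the sequences $\{u\}_{\mathcal{S}_j}^{n}$ together with $x_j^{n}$ are jointly $\epsilon$-typical with respect to $P(\{U\}_{\mathcal{S}_j},X_j)$, and it independently bins $x_i^{n}$ uniformly into $2^{nR_i}$ bins for each $i$. It sends $m_{\mathcal{J}}$ on branch $\mathcal{J}$ for $\mathcal{J}\in\{\{1,2,3\},\{1,2\},\{1,3\},\{2,3\}\}$ and the bin index of $x_i^{n}$ on branch $\{i\}$; decoder $i$ reads the indices on the branches $\mathcal{J}\ni i$, looks up the corresponding $U$-codewords, and then reconstructs $x_i^{n}$ by typicality decoding within the received bin.

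Second, I would show the encoding step succeeds with probability tending to $1$ by re-running the second-moment (Chebyshev) argument from the proof of Theorem~\ref{thm:main_tm}, now conditioned on the fixed typical $\mathbf{x}^{n}$ and with Lemma~\ref{cor:PT} replacing Lemma~\ref{lem:PT} to account for the layering of the $U_{ij}$ codebooks over the $U_{123}$ codebook. Exactly as in that proof, the failure probability is bounded by $4\sum_{\mathcal{J}\neq\emptyset}2^{-n\sum_{\mathcal{I}\in\mathcal{J}}R_{\mathcal{I}}}/P(\mathcal{E}(\{m\}_{\mathcal{J}}))$ in analogy with (\ref{eq:P8}), the sum running over nonempty sub-collections $\mathcal{J}$ of the branch labels $\{123,12,13,23\}$, and Lemma~\ref{cor:PT} identifies $P(\mathcal{E}(\{m\}_{\mathcal{J}}))\doteq 2^{-n(\Sigma_{\mathcal{J}}-H^{*}(\{U\}_{\mathcal{J}}|\mathbf{X}))}$, where $\Sigma_{\mathcal{J}}$ is the sum of the conditional entropies of the codebook layers indexed by $\mathcal{J}$, each conditioned on its ancestors lying in $\mathcal{J}$, and $H^{*}(\{U\}_{\mathcal{J}}|\mathbf{X})$ is, exactly as in Lemma~\ref{lem:PT}, the maximum of $H(\tilde P(\{U\}_{\mathcal{J}})\,|\,\mathbf{X})$ over joint laws matching $P(\{U\}_{\mathcal{J}\cap\mathcal{S}_j},X_j)$ for every $j$. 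Because each $U_{ij}$ sits over $U_{123}$, only the sub-collections $\mathcal{J}$ that contain the label $123$ give non-redundant bounds, and for those $\Sigma_{\mathcal{J}}=H(U_{123})+\sum_{ij\in\mathcal{J}}H(U_{ij}|U_{123})$; enumerating such $\mathcal{J}$ up to the symmetry of $\{12,13,23\}$, namely $\{123\}$, $\{123,ij\}$, $\{123,ij,ik\}$ and $\{123,12,13,23\}$, reproduces the first four families of inequalities in~\eqref{eq:3GW_rate_cond} (using $H(U_{123})+H(U_{ij}|U_{123})=H(U_{123},U_{ij})$ for the second family).

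Third, the private branches are handled by a standard Slepian--Wolf argument. Conditioned on the encoding step succeeding, decoder $i$ recovers the indices on its branches without error (they are transmitted uncoded), hence recovers $\{u\}_{\mathcal{S}_i}^{n}$; since $(\{u\}_{\mathcal{S}_i}^{n},x_i^{n})$ is jointly typical by construction, the bin index of $x_i^{n}$ together with $\{u\}_{\mathcal{S}_i}^{n}$ determines $x_i^{n}$ with vanishing error probability whenever $R_i\ge H(X_i\,|\,\{U\}_{\mathcal{S}_i})+\delta(\epsilon)$, where $\{U\}_{\mathcal{S}_i}=\{U\}_{\mathcal{J}:\,i\in\mathcal{J}}$; this is the last inequality in~\eqref{eq:3GW_rate_cond}. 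A union bound over the atypical-source event, the encoding-failure event, and the three reconstruction-failure events drives the overall error probability to $0$; letting $\epsilon\to0$ and taking the closure over all conditional laws $P(U_{123},U_{12},U_{13},U_{23}|X_1,X_2,X_3)$ then gives the stated achievable region.

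The main obstacle is the conditional covering statement invoked in the second step. One must verify that, relative to a fixed typical source $\mathbf{x}^{n}$ \emph{and} with a superimposed codebook, the exponents produced by the Chebyshev argument are governed precisely by the per-subset maximum-entropy functionals $H^{*}(\{U\}_{\mathcal{J}}|\mathbf{X})$ whose defining constraints tie $\{U\}_{\mathcal{J}\cap\mathcal{S}_j}$ to $X_j$ \emph{alone} --- reflecting that decoder $j$ observes only $X_j$ --- rather than to all of $\mathbf{X}$; this is exactly the role of Lemma~\ref{cor:PT}, as anticipated in the text. The one genuinely non-routine point is to check that, because each $U_{ij}$ codebook is layered on the $U_{123}$ codebook, the index sets $\mathcal{J}$ not containing $123$ contribute bounds dominated by the retained ones and can be dropped; everything else is a transcription of the proof of Theorem~\ref{thm:main_tm} together with textbook Slepian--Wolf coding for the private rates.
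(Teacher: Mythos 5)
Your proposal is correct and follows essentially the same route as the paper: superposition codebooks ($U_{123}$ from its marginal, then $U_{12},U_{13},U_{23}$ conditionally on it), the second-moment covering argument of Theorem \ref{thm:main_tm} adapted via Lemma \ref{cor:PT} to the layered codebooks with per-decoder constraints tying $\{U\}_{\mathcal{S}_j}$ only to $X_j$, and binning for the private rates $R_i$. In fact your write-up supplies more detail than the paper's own proof, which states the construction and then asserts that the rate conditions follow "rather directly" from Lemmas \ref{lem:PT}, \ref{cor:PT} and Theorem \ref{thm:main_tm}; your observations about which index sub-collections are non-redundant and about the exponents $\Sigma_{\mathcal{J}}-H^{*}(\{U\}_{\mathcal{J}}|\mathbf{X})$ are consistent with the stated inequalities.
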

\begin{proof}
A codebook for $U_{123}$ consisting of $2^{nR_{123}}$ codewords
is generated according to the marginal $P(U_{123})$. Conditioned
on each codeword of $U_{123}$, independent codebooks are generated
for $U_{12},U_{13}$ and $U_{23}$ at rates of $R_{12},R_{13}$ and
$R_{23}$ according to their respective conditional distributions
$P(U_{12}|U_{123})$, $P(U_{13}|U_{123})$ and $P(U_{23}|U_{123})$.
If the rates satisfy (\ref{eq:3GW_rate_cond}), then there always
exists a codeword tuple, one from each codebook, denoted by $(u_{123}^{n},u_{12}^{n},u_{13}^{n},u_{23}^{n})$,
such that the following subsets of sequences are jointly typical according
to their respective subset joint densities: $(x_{1}^{n},u_{123}^{n},u_{12}^{n},u_{13}^{n})$,
$(x_{2}^{n},u_{123}^{n},u_{12}^{n},u_{23}^{n})$ and $(x_{3}^{n},u_{123}^{n},u_{13}^{n},u_{23}^{n})$.
The proof follows rather directly from Lemmas \ref{lem:PT}, \ref{cor:PT}
and Theorem \ref{thm:main_tm} as $U_{123}$ is part of $\mathcal{S}_{1},\mathcal{S}_{2}$
and $\mathcal{S}_{3}$. The last constraint in (\ref{eq:3GW_rate_cond})
denotes the minimum rate of the bin indices required to achieve lossless
reconstruction at each sink given that all the codewords received
at any sink are jointly typical. 
\end{proof}

\section{Discussion}

We note that the conditions in (\ref{eq:3GW_rate_cond}) ensure joint
typicality of source sequence $X_{i}^{n}$ only with the codewords
which reach sink $i$. However an alternate achievable region (which
is subsumed in the above region) can be derived using results of the
general $L-$channel MD problem in \cite{VKG} which extends the principles
underlying (\ref{eq:MCL3}) to the multiple descriptions framework.
Due to the inherent structure of the MD problem, joint typicality
of all the transmitted codewords is necessary. However imposing such
a constraint limits the performance of systems that do not explicitly
require such conditions. Note that, although we have not proved formally
that the new region for the multi-user Gray-Wyner network is strictly
larger than that derivable from the results in \cite{VKG}, Theorem
\ref{lem:eg} suggests that for general sources, there exist points
which are strictly outside. It is important to note that implications
of the results we derived may not always lead to a strictly larger
achievable region. A classic example of this setting is the 2 user
Gray-Wyner network \cite{GW} for which the complete rate-distortion
region can be achieved even if joint typicality of all the codewords
is imposed. This is because, in the 2-user scenario, there is no inherent
conflict between maximum entropy distributions of different subsets
of random variables. However, in the $L-$user setting (as seen in
Theorem \ref{lem:eg}), such a conflict arises and maintaining joint
typicality only within subsets plays a paramount role in deriving
improved achievable regions.

\end{document}